\newtheorem{theorem}{Theorem}
\newtheorem{corollary}[theorem]{Corollary}
\newtheorem{proposition}[theorem]{Proposition}
\newtheorem{definition}[theorem]{Definition}
\newtheorem{theorem-nn}[theorem]{Theorem}
\newtheorem*{theorem:CL}{\cref{thm:CL}}
\newtheorem{example}[theorem]{Example}
\newtheorem{remark}[theorem]{Remark}
\theoremstyle{nonumberplain}
\newtheorem{proof}{Proof}
\theoremstyle{nonumberbreak}
\renewcommand{\epsilon}{\varepsilon}
\newcommand{\E}{\mathcal{E}}
\newcommand{\be}{\begin{equation}}
\newcommand{\ee}{\end{equation}}
\newcommand{\bea}{\begin{eqnarray}}
\newcommand{\eea}{\end{eqnarray}}
\newcommand{\ben}{\begin{enumerate}}
\newcommand{\een}{\end{enumerate}}
\newcommand{\ra}{\rangle}
\newcommand{\la}{\langle}
\newcommand{\mc}{\mathcal}
\newcommand{\tr}{\mathrm{Tr}}
\newcommand{\diag}{\mathrm{diag}}
\newcommand{\nn}{\nonumber}
\newcommand{\lb}{\label}
\DeclareMathOperator*{\Moplus}{\text{\raisebox{0.25ex}{\scalebox{0.7}{$\bigoplus$}}}}
\begin{document}

\title{Continuum limits of Matrix Product States}

\author{Gemma De las Cuevas}
\address{Institut f\"ur theoretische Physik, Universit\"at Innsbruck, 
Technikerstr.\ 21a, 6020 Innsbruck, Austria} 
\author{Norbert Schuch}
\address{Max Planck Institute for Quantum Optics, 
Hans-Kopfermann-Str.\ 1, 85748 Garching, Germany}
\author{David Perez-Garcia}
\address{Departamento de An\'alisis Matem\'atico and IMI, 
Universidad Complutense de Madrid, 28040 Madrid, Spain}
\address{ICMAT, C/ Nicol\'as Cabrera, Campus de Cantoblanco, 28049 Madrid,
Spain}
\author{J. Ignacio Cirac}
\address{Max Planck Institute for Quantum Optics, 
Hans-Kopfermann-Str.\ 1, 85748 Garching, Germany}

\begin{abstract}
We determine which translationally invariant matrix product states have a
continuum limit, that is, which can be considered as discretized versions
of states defined in the continuum.  To do this, we analyse a
fine-graining renormalization procedure in real space, characterise the
set of limiting states of its flow, and find that it strictly contains the
set of continuous matrix product states. We also analyse which states have
a continuum limit after a finite number of a coarse-graining
renormalization steps. We give several examples of states with and without
the different kinds of continuum limits.
\end{abstract}

\maketitle

\section{Introduction}

The quest for continuum limits of discrete theories is a central topic in high energy physics \cite{We95,Pe95b}  and condensed matter physics \cite{Sa11,Fr13b}.  In many cases, the continuum limit of a theory is obtained after a renormalization process, where the lattice constant (which provides an energy cutoff) is taken to zero. This occurs, for instance, in quantum lattice models, where the continuum limit is the desired quantum field theory and the renormalization involves the redefinition of the parameters of the Hamiltonian describing the model. The question of whether a particular quantum lattice model possesses the correct continuum limit under renormalization is of central interest in several fields of quantum physics.

Tensor networks have proven to be useful tools to study strongly correlated systems in quantum lattice models \cite{Or14b,Sc05b,Ve08}. In fact, in one spatial dimension, matrix product states (MPS) \cite{Fa92,Pe07}, a special kind of tensor network states (TNS), provide the most powerful technique to study such systems. In contrast to some traditional approaches to describe quantum many-body systems where the {\em Hamiltonian} (or the action) is the central object of study, the theory of tensor networks concentrates on the description of quantum many-body {\em states}. The reason is that they are completely characterised (for homogeneous systems) by a simple tensor, whose rank depends on the coordination number of the lattice. 
The fact that ground states (vacuum) and low energy excitations of local theories are expected to have very little entanglement 
makes tensor networks efficient tools for describing them. Furthermore, they can be used as toy models to analyse complex phenomena associated to topology \cite{Sc10b}, symmetry protection \cite{Ch11,Sc11}, or even chirality \cite{Wa14}, in relatively simple terms.

Renormalization procedures in tensor networks and, in particular, in MPS,
have played an important role in the development of various methods
associated to them. The renormalization of a TNS provides a coarse-grained
description of the state and, in the case of MPS, flows to a very specific
family of states that can be fully characterised \cite{Ve05}. In fact,
these fixed points of the renormalization procedure have been used to
obtain a classification of the (gapped) quantum phases of spin chains in
one spatial dimension \cite{Ch11,Sc11}.

In this work, we investigate how the same renormalization procedure can give a rigorous method to obtain the continuum limit of an MPS. That is, we consider the inverse procedure of coarse-graining, i.e.\ fine-graining, and investigate to what extent it converges, and to which kind of states. Or, more boldly stated, we solve the following problem: given an MPS, when is it the coarse-grained picture of the vacuum of a quantum field theory in one spatial dimension? We will then say that such an MPS has a continuum limit (CL).

To be specific, we consider a fine-graining procedure  such that the state is translationally invariant at all steps. Moreover, each fine-graining step is carried out by some isometry, which can differ from step to step. 
As a consequence, the finer state is in fact the same state as the original one, but  written in a finer basis, i.e.\ a basis with more sites. 
Thus, our definition of CL is very restrictive and can be seen as a first step toward  the study of CLs in more general settings.

Now, while it is clear that some states must have a CL in the sense specified below, it is also clear some others will not. For instance, a ferromagnetic state $|0,\ldots,0\rangle$ clearly has a CL, which is the vacuum of a non-interacting theory in the continuum. In contrast, a superposition of two antiferromagnetic states,
 \be
 \label{eq:AF}
 |\Psi_{\rm af}\ra = \frac{1}{\sqrt{2}} \left( |0,1,0,1,\ldots\ra +
 |1,0,1,0,\ldots\ra \right),
 \ee
will not have such a limit, since there exists no (translationally
invariant) state such that if we coarse-grain it, we obtain $|\Psi_{\rm
af}\ra$. 
But, what about states like the AKLT \cite{Af88}, the cluster
state \cite{Br01}, or other prominent states found in the field of
condensed matter or quantum information theory? 

On the other hand, by flipping every second spin in the $z$ direction,  $|\Psi_{\rm af}\ra$ is mapped to a superposition of the two ferromagnetic states, $|0,0,\ldots,0\ra +|1,1,\ldots,1\ra  $, which  has a CL. 
While in our  definition of CL we only allow to apply  operations (isometries) which are the same on every site,  this restriction is lifted in our second definition of CL, called the coarse continuum limit. In the latter, we first coarse grain the state, and then take the CL of the coarse-grained state. 
Thus, $|\Psi_{\rm af}\ra$  has a coarse CL, but does every state have a coarse CL?

In this paper we give an answer to these questions by determining the conditions for a state to have a CL. We also characterize which are the set of states of the quantum field theory which are the CL of an MPS. We find that such a set contains continuous MPS (cMPS) \cite{Ve10,Ha13}, as one would expect, but it also contains some extensions that have not been encountered so far in the study of TNS. We finally show that there exist states that do not possess a CL even if we first coarse-grain any finite number of times, i.e.\ not every state has a coarse CL. We note that different continuum limits of quantum lattice systems have been considered in Ref.\ \cite{Os-cl}, and tensor network descriptions of quantum field theories have been studied in Refs.\ \cite{Je15}.

This paper is organised as follows. 
In \cref{sec:cl} we define and characterise the CL of MPS. 
In \cref{sec:ccl} we define and characterise the coarse CL of MPS, present examples of states with either kind of CL, and compare the two CLs.   
In \cref{sec:concl} we conclude. We leave the proof of the main result (\cref{thm:CL}) to \cref{app:proof-dyadic}.

%%===========================
\section{Continuum limit}
\label{sec:cl}

In this section we present our work on the CL of an MPS. 
We will first explain the setting of our problem (\cref{ssec:setting}), 
define and characterise $p$-refining (\cref{ssec:prefining}), and finally define and characterise the CL of an MPS (\cref{ssec:cl}).

\subsection{The setting}
\label{ssec:setting}

Our starting point is a three-rank tensor $A = \{A^i\in \mathcal{M}_D\}_{i=1}^d$, where $\mc{M}_D$ denotes the set of $D\times D$ complex matrices, $D$ is called the bond dimension, and $d$ the physical dimension, both of which are assumed to be fixed and finite.
$A$ generates a translationally-invariant (TI) MPS
\be
|V_N(A)\ra :=  \sum_{i_1,\ldots,i_N}
\tr(A^{i_1}A^{i_2}\cdots A^{i_N})|i_1,\ldots, i_N\ra
\label{eq:V}
\ee
for every $N\in \mathbb{N}$, as well as the family
\be
\mathcal{V}(A) := \left\{ |V_N(A)\ra \right\}_{N\in \mathbb{N}} .
\ee
As the tensor $A$ completely determines all the properties of the MPS it generates, when developing the theory of MPS one works directly with such a tensor.

The  \emph{transfer matrix} of $\mc{V}(A)$, $E_A$, is defined as \cite{Ve05}
 \be
 E_A = \sum_{i=1}^d A^i \otimes \bar A^{i},
 \label{eq:tm}
 \ee
where the bar indicates complex conjugation. 
Note that $E_A$ is (a matrix representation of) the completely positive map (CPM) $\E(\cdot) = \sum_{i=1}^d A^i \cdot A^{i\dagger}$, and it is independent of any isometry applied to the physical index $i$. 
In Ref.\ \cite{De17} we showed that, without loss of generality, 
$A$ can be taken to be in irreducible form, that is, 
$A^i = \Moplus_j \mu_j A_j^i$, where $\mu_j>0$, and each $E_{A_j}$ is an irreducible CPM (i.e.\ a CPM with a non-degenerate eigenvalue 1, but which can have other eigenvalues of modulus 1).
Moreover, $E_A$ can be taken to be a quantum channel (i.e.\ a trace-preserving (TP) CPM). 
We will thus indistinctively call $E_A$ a transfer matrix or a quantum channel. 
If clear from the context, we will simply denote it by $E$. 
%%===========
\subsection{Definition and characterisation of $p$-refining}
\label{ssec:prefining}

The renormalization procedure introduced in Ref.\ \cite{Ve05} basically maps $|V_N(A)\ra$ to 
\be
|V_{N}(B)\ra = (W^{ \dagger})^{\otimes N}|V_{pN}(A)\ra \quad \forall N 
\label{eq:coarsegraining}
\ee
where $p>1$ is an integer and $W:\mathbb{C}^d\to (\mathbb{C}^d)^{\otimes p}$
is an isometry. 
We now introduce the inverse step.
\begin{definition}
We say that $\mc{V}(B)$ can be \emph{$p$-refined} if there exists another tensor $A$ and an isometry  $W$ such that
\be
|V_{pN}(A)\ra = W^{\otimes N}|V_N(B)\ra \quad \forall N .
\label{eq:fine}
\ee
\end{definition}

Clearly, if $\mc{V}(B)$ can be $p$-refined with the isometry $W$, then it can also be $p$-refined with the isometry $U^{\otimes p}W$, where $U$ is a unitary. 
We thus call two isometries $W,W'$ inequivalent if there is no unitary $U$ such that $W' = U^{\otimes p}W$. 
Similarly, we say that $\mc{V}(B)$ can be $p$-refined in $r$ inequivalent ways if it can be $p$-refined with $r$ inequivalent isometries.

In Ref.\ \cite{De17} we showed that $\mathcal{ V}(B)$ can be $p$-refined if and only if $E_B$ is \emph{$p$-divisible}; 
that is, if there exists a quantum channel $E_{p}$ such that $E_p^p=E_B$. 
Moreover, the number of inequivalent ways of $p$-refining a state is precisely given by the number of $p$th roots of its transfer matrix which are also a transfer matrix. 
The divisibility of quantum channels has been analyzed in Refs.\ \cite{Ho87,De89b,Wo08} in the context of Markovian evolution of quantum systems. 
In particular, there exist channels that are are not $p$-divisible for any $p$ \cite{Wo08}. 
This automatically implies that there are states that cannot be refined at all \cite{Wo08b} -- 
we will see two examples thereof in \cref{ex:holevo} and \cref{ex:aklt}.
In \cref{rem:roots} we will mention examples of states that can be refined in several inequivalent ways. 

\subsection{Definition and characterisation of continuum limit }
\label{ssec:cl}

One could define the CL of an MPS as the limiting point
of the $p$-refining procedure. However, such definition would not be satisfactory
since there are states that can be refined but that should not have  a CL. 
This can be illustrated by means of the antiferromagnetic state of  \cref{eq:AF}, 
which can be $3$-refined infinitely many times with the isometry $W=|0,1,0\rangle\langle 0| + |1,0,1\rangle\langle 1|$. 
 However, it is clear that it cannot exist in the continuum. (This state will be more thoroughly analysed in \cref{ex:aferro}).

To deal with this problem, we notice that if we had a CL, it would be 
reasonable to demand that the limit should not depend on whether we block
a few spins when we are close to that limit. Differently speaking,
introducing an intermediate coarse-graining step should not affect the
form of the CL.  This e.g.\ rules out the antiferromagnetic state: 
In \cref{eq:AF}, if we $3$-refine many
times 
with the isometry
$W= |0,1,0\ra\la 0| + |1,0,1\ra\la 1| $
and then block 2 spins, 
with the isometry
$W' = |0,1\ra\la 0| + |1,0\ra\la 1| $
we obtain a GHZ-like state \cite{Gr89b},
$|0,0,\ldots,0\ra + |1,1,\ldots,1\ra $, which is very different from the
fixed point if we had not blocked. This motivates the following definition.

\begin{definition}
\lb{def:CL}
We say that \emph{$\mathcal{V}(B)$ has a continuum limit (CL)} if there is a $p>1$
such that the procedure of $p$-refining $\ell$ times followed by the blocking of $n_{\ell}\in \mathbb{N}$ of the resulting spins 
converges in $\ell$, as long as $(n_\ell/p^{\ell})_\ell\to 0$ as $\ell\to\infty$.
\end{definition}

Note that $(n_\ell/p^{\ell})_\ell$ denotes the infinite sequence whose elements are $n_\ell/p^{\ell}$ with $\ell\in \mathbb{N}$.
We now want to characterise which states have a CL in terms of the divisibility properties of its transfer matrix.
The requirement that the state be $p$-refinable infinitely many times translates to the requirement that its transfer matrix $E$ be \emph{$p$-infinitely divisible}. 
This means that $E$ is $p^\ell$-divisible for any $\ell \in \mathbb{N}$, that is, that for any $\ell \in \mathbb{N}$ there is a quantum channel $E_{p^\ell}$ such that $E_{p^\ell}^{p^\ell} =E$.
Note that a quantum channel $E$ is called \emph{infinitely divisible} if it is $n$-divisible for any $n$, i.e.\ $E = E_n^n$ for all $n\in  \mathbb{N}$ \cite{Wo08}. 

We also need to characterise the condition of stability of the limiting procedure  under blocking (cf.\  \cref{def:CL}). To this end, we  introduce the following function (see, e.g., Ref.\ \cite{Yu76}). 
Let $E$ be a $p$-infinitely divisible quantum channel  and let $\{E_{p^\ell}\}_{\ell \in \mathbb{N}}$ be a set of roots which are quantum channels themselves. We define the function  $f_{p,E}$ as 
 \be
 f_{p,E}(n,\ell)=E_{p^\ell}^n,
 \label{eq:fpE}
 \ee
where $n,\ell\in \mathbb{N}$. 
Now, we say that $f_{p,E}$ is \emph{continuous at 0} if there exists a set $\{E_{p^\ell}\}_{\ell\in \mathbb{N}}$ and  a matrix  $Q$,
such that for all sequences $\{n_k,\ell_k\}_{k=1}^\infty$ fulfilling
$\lim_{k\to\infty}n_k/p^{\ell_k}= 0$,
it holds that $\lim_{k\to\infty}f_{p,E}(n_k,\ell_k) = Q$.
Thus, the existence of a CL is equivalent to the existence of  a $p>1$ such that $E_B$ is $p$-infinitely divisible, and an $f_{p,E_B}$ which is continuous at zero. With this, we can characterise the set of MPS with a CL.

\begin{theorem}[Main result]
\label{thm:CL}
Given $\mathcal{V}(B)$ with $B$ in irreducible form, the following statements are equivalent:
\begin{enumerate}
\item $\mathcal{V}(B)$ has a CL.
\item $E_B$ is infinitely divisible. 
\label{infdiv}
\item There is a quantum channel $P$ and a Liouvillian of Lindblad form $L$ such that $E_B=Pe^L$,  $P^2=P$ and $PLP=PL$. \label{hd}
\end{enumerate}
\end{theorem}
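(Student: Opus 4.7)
The plan is to prove the cycle $(3)\Rightarrow(2)$, $(3)\Rightarrow(1)$, $(2)\Rightarrow(3)$, $(1)\Rightarrow(3)$, with the last implication as the main novelty. The crucial structural observation that makes (3) easy to exploit is that $PLP=PL$ is precisely block lower-triangularity of $L$ with respect to the decomposition $\mathrm{range}(P)\oplus\ker(P)$: in a basis where $P=\mathrm{diag}(I,0)$, this relation forces the upper-right block of $L$ to vanish. This block-triangularity propagates to the exponential, so $e^{L/p^{\ell}}$ has the same structure with top-left block $e^{A/p^{\ell}}$, where $A:=L|_{\mathrm{range}(P)}$. A direct computation then yields $(Pe^{L/p^{\ell}})^{n}=P\,e^{nL/p^{\ell}}$ for every $n$, so setting $n=p^{\ell}$ gives $E_{p^\ell}:=Pe^{L/p^\ell}$ as a $p^\ell$-th root of $E_B$; the same argument applied with any integer $p$ proves (2). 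Moreover $f_{p,E_B}(n,\ell)=P\,e^{nL/p^{\ell}}\to P$ whenever $n/p^\ell\to 0$, giving continuity at $0$ with $Q=P$ and hence (1). CPTP-ness of the roots follows because $P$ is CPTP and $e^{tL}$ is CPTP by the Lindblad form of $L$.

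The direction $(2)\Rightarrow(3)$ is a structure theorem for infinitely divisible quantum channels, which I would invoke from \cite{De89b,Wo08}. The intuition is that infinite divisibility forces the unit-modulus spectrum of $E_B$ (which a priori could carry a cyclic root-of-unity structure) to collapse to a single idempotent $P$, since arbitrarily small roots have to remain completely positive; the restriction of $E_B$ to $\mathrm{range}(P)$ is then invertible and embeds into a continuous CPTP semigroup $e^{tL}$ whose generator has Lindblad form, with the invariance of $\mathrm{range}(P)$ under $L$ being precisely $PLP=PL$.

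The hard direction is $(1)\Rightarrow(3)$. Fix $p$ and a family of roots $\{E_{p^{\ell}}\}$ with $f_{p,E_B}$ continuous at $0$, and let $Q:=\lim_{n/p^{\ell}\to 0} f_{p,E_B}(n,\ell)$. As a limit of quantum channels, $Q$ is CPTP, and combining the composition law $f(n,\ell)f(n',\ell)=f(n+n',\ell)$ with continuity at $0$ yields $Q^{2}=Q$. The aim is then to extend $t=n/p^{\ell}\mapsto f_{p,E_B}(n,\ell)$ to a strongly continuous one-parameter CPTP semigroup $\{E^{t}\}_{t\ge 0}$ with $E^{0}=Q$ and $E^{1}=E_{B}$. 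On $\mathrm{range}(Q)$ this is an invertible continuous CPTP semigroup, so by the Lindblad--Gorini--Kossakowski--Sudarshan theorem its generator has Lindblad form; lifting this to a Lindbladian $L$ on the whole space satisfying $QL=QLQ$ then yields $E_{B}=Q\,e^{L}$, i.e.\ (3) with $P:=Q$. The main technical obstacles are (i) reconciling the non-uniqueness of $p^{\ell}$-th roots with the continuity hypothesis so as to obtain a bona fide semigroup indexed by all dyadic rationals (not merely separate convergent sequences), and (ii) passing from a continuous semigroup on $\mathrm{range}(Q)$ to a genuine Lindbladian on the whole space compatible with the block structure imposed by $Q$; the latter is where the structure result behind $(2)\Rightarrow(3)$ is reused in a localised form.
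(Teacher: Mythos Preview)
Your treatment of $(3)\Rightarrow(1)$, $(3)\Rightarrow(2)$, and the citation for $(2)\Leftrightarrow(3)$ matches the paper's (the block--lower-triangular reading of $PLP=PL$ and the identity $(Pe^{L/p^\ell})^n=Pe^{nL/p^\ell}$ are exactly what underlies the paper's remark that $f_{p,E_B}(n,\ell)=Pe^{Ln/p^\ell}$ is continuous at~$0$).

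The difference is in closing the loop. The paper proves $(1)\Rightarrow(2)$ directly rather than $(1)\Rightarrow(3)$, and this sidesteps both obstacles you identify. Given $p$-infinite divisibility with roots $\{E_{p^\ell}\}$ and continuity of $f_{p,E_B}$ at~$0$, an $n$-th root of $E_B$ for arbitrary $n$ is constructed as follows: write $1/n = p^{-\ell}\bigl(\lfloor p^\ell/n\rfloor + r_\ell/n\bigr)$ with $0\le r_\ell<n$, so that the sequence $T_\ell:=E_{p^\ell}^{\lfloor p^\ell/n\rfloor}$ lives in the compact set of channels and has a subsequential limit $E_n$; a short triangle-inequality estimate, using $\|T\|=1$ for channels and continuity at~$0$ applied to the remainders $r_\ell/p^{\ell}\to 0$, gives $\|E_n^n-E_B\|=0$. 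This is elementary and complete: no semigroup needs to be assembled from the roots, no compatibility $E_{p^{\ell+1}}^{\,p}=E_{p^\ell}$ is ever used, and the Holevo--Denisov structure theorem is invoked only once, for $(2)\Rightarrow(3)$.

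Your route through $(1)\Rightarrow(3)$ is not wrong in spirit, but the obstacles you flag are real and not bypassed: the map $t=n/p^\ell\mapsto f_{p,E_B}(n,\ell)$ is not well-defined as a function of $t$ alone unless the chosen roots are compatible (e.g.\ $t=1/p$ corresponds to both $E_p$ and $E_{p^2}^{\,p}$, which need not coincide), so you would first have to manufacture a compatible tower of roots from the continuity hypothesis, and then essentially redo the Denisov argument to extract $L$. The paper's detour via $(2)$ avoids all of this.
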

The proof is given in Appendix~\ref{app:proof-dyadic}.

Note that the last item fully
characterizes all possible CLs.  If $P=\openone$, the corresponding
transfer matrix $e^L$ coincides with that of a TI cMPS. Thus, as expected,
all TI cMPS can be limits of TI MPS. However, for $P\ne \openone$, other
states than cMPS appear as possible CLs.
Note also that one can easily see from condition \ref{hd} of \cref{thm:CL} that the limit is smooth, 
as $\lim_{t\to 0} E^t =\lim_{t\to 0} Pe^{tL} =P$.
Finally, note that from \cref{thm:CL} and the results of \cite{De17} it follows that if 
$\mc{V}(B)$ has a CL, then $\mc{V}(B)$ can be $p$-refined for any $p>1$.

\section{Coarse continuum limit}
\label{sec:ccl}

We now present a more relaxed definition of a CL of an MPS, which we call the coarse CL. 
We will first define and characterise it (\cref{ssec:ccl}),
give several examples of states with or without a coarse CL (\cref{ssec:ex}),
and finally use these examples to compare the two notions of CL (\cref{ssec:comparison}).

\subsection{Definition and characterisation}
\label{ssec:ccl}

We have seen that to obtain a meaningful definition of a CL we have to impose that we can block towards the end of the refinement, and still obtain the same limit. 
We can thus ask what happens if we allow for blocking before the refinement.
For example, by blocking 2 sites of the antiferromagnetic state (\cref{eq:AF}), we obtain the ferromagnetic state, which has a trivial CL.
This motivates the following definition (see \cref{fig:RG-general}).

\begin{figure}[t]\centering
\includegraphics[width=0.95\columnwidth]{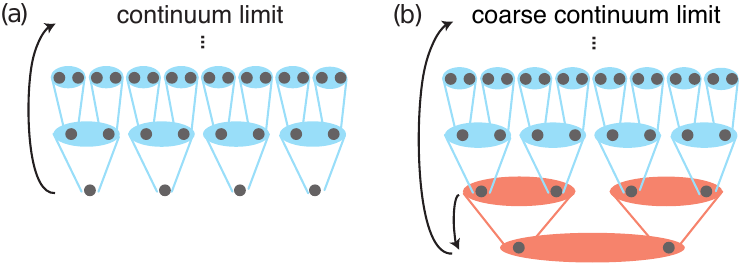}
\caption{
Sketch of (a) the continuum limit and (b) the coarse continuum limit.  }
\label{fig:RG-general}
\end{figure}

\begin{definition}
\label{def:dcl}
We say that $\mc{V}(A)$ has a \emph{coarse CL}
if there is a $\mc{V}(B)$ and an $n\in \mathbb{N}$ such that
$\mc{V}(A)$ is the $n$-refinement of $\mc{V}(B)$, and  $\mc{V}(B)$ has a CL.
\end{definition}

Note that every state $\mc{V}(A)$ is the $p$-refinement of some other state $\mc{V}(B)$, i.e.\ given $A$ and $p$ there is always  an isometry $W$ and a tensor $B$ that satisfies Eq.\ \eqref{eq:coarsegraining}. 
Moreover, the process of ``coarse-graining'' $p$ sites (the opposite of $p$-refining) is essentially unique; more precisely, different isometries  will give rise to tensor $B$'s which are related by a unitary matrix in the physical index, as shown in Ref.\ \cite{Ve05}. 
%Thus, all such isometries are ''equivalent''.  
This is again best understood at the level of the transfer matrix: coarse-graining $p$ sites corresponds to 
taking the $p$th power of the transfer matrix, which gives a unique result, and which always corresponds to a valid transfer matrix. This is to be contrasted with $p$-refining, which is only possible if there is at least one $p$th root of $E$ which is a valid transfer matrix, and in case there is, they may be multiple such roots. 

The following characterisation is immediate from the above  results.

\begin{corollary}
\label{thm:dcl}
$\mc{V}(A)$ has a coarse CL if and only if there exists an $n\in \mathbb{N}$ such that
$E_A^n$ is infinitely divisible.
\end{corollary}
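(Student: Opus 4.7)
The plan is to deduce the corollary directly from \cref{thm:CL} via the dictionary, already set up in the paper (and in [De17]), between $p$-refinability of a family $\mc{V}(B)$ and $p$-divisibility of its transfer matrix $E_B$: namely, $\mc{V}(B)$ can be $n$-refined to $\mc{V}(A)$ if and only if $E_B$ is $n$-divisible, and in that situation $E_A^{\,n}=E_B$ (under the standing convention that both tensors are in irreducible form). Once this bridge is in place, the corollary is essentially the statement obtained by peeling off one refinement step from \cref{thm:CL}.

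For the forward direction, suppose $\mc{V}(A)$ has a coarse CL. By \cref{def:dcl}, there exist $n\in\mathbb{N}$ and $\mc{V}(B)$ (with $B$ in irreducible form, which we may assume by [De17]) such that $\mc{V}(A)$ is the $n$-refinement of $\mc{V}(B)$ and $\mc{V}(B)$ has a CL. The refinement relation gives $E_A^{\,n}=E_B$, and applying \cref{thm:CL} to $\mc{V}(B)$ yields that $E_B$, and hence $E_A^{\,n}$, is infinitely divisible.

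For the backward direction, suppose some $n\in\mathbb{N}$ makes $E_A^{\,n}$ infinitely divisible. Form the blocked tensor
\begin{equation*}
\tilde{B}^{(i_1,\ldots,i_n)}=A^{i_1}A^{i_2}\cdots A^{i_n},
\end{equation*}
so that $|V_N(\tilde{B})\ra=|V_{nN}(A)\ra$ and $E_{\tilde{B}}=E_A^{\,n}$; thus $\mc{V}(A)$ is the trivial $n$-refinement of $\mc{V}(\tilde{B})$ via the identity-inclusion isometry. Taking $B$ to be the irreducible form of $\tilde{B}$, the results of [De17] ensure that both the $n$-refinement relation with $\mc{V}(A)$ and the infinite divisibility of the transfer matrix are preserved, so $E_B$ is infinitely divisible. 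Applying \cref{thm:CL} gives that $\mc{V}(B)$ has a CL, and hence by \cref{def:dcl} that $\mc{V}(A)$ has a coarse CL.

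The only non-routine point in the argument is keeping track of the irreducible-form reductions needed to invoke \cref{thm:CL}, which is why the corollary is labelled as immediate: this bookkeeping is entirely handled by the machinery imported from [De17], and no new continuity or channel-theoretic ingredients beyond those used in \cref{thm:CL} are required.
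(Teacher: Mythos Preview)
Your proposal is correct and follows exactly the route the paper intends: the corollary is stated as ``immediate from the above results,'' and you have simply unpacked this by combining \cref{def:dcl}, the refinement/divisibility correspondence from \cite{De17} (which gives $E_A^{\,n}=E_B$ whenever $\mc{V}(A)$ is an $n$-refinement of $\mc{V}(B)$), and \cref{thm:CL}. The bookkeeping you do with irreducible forms is precisely the routine reduction the paper has already stipulated may be assumed without loss of generality.
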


\begin{remark}[Computational complexity]
What is the computational complexity of deciding whether a state has a (coarse) CL? 
Concerning the CL, deciding infinite divisibility is at least as hard as deciding Markovianity, 
since the latter amounts to deciding the former together with being full rank (see condition 3 of \cref{thm:CL}), 
and being full rank can be decided efficiently. 
Deciding Markovianity has been formulated as an integer Semidefinite Program for fixed input dimension \cite{Wo08b}, 
and shown to be NP-hard as a function of the bond dimension \cite{Cu12b}.
Concerning the coarse CL, to the best of our knowledge, 
the computational complexity of determining whether, given a channel $E$, there is some $n\in \mathbb{N}$ such  $E^n$ is infinitely divisible is not known. 
\end{remark}

\subsection{Examples}
\label{ssec:ex}

We now  present several examples of states with either kind of CL which illustrate \cref{thm:CL} and \cref{thm:dcl}.

\begin{example}[The ferromagnet]
\label{ex:ferro}
Let us start with an equal superposition of $m$ ferromagnetic states, 
\be
|V_N(B)\ra = \sum_{i=0}^{m-1} |i,i\ldots i \ra ,
\ee
which is given by the tensor $B = \{B^i\in \mc{M}_D\}_{i=0}^{m-1}$, where $B^i = |i\ra\la i|$ for $i=0,1,\ldots, m-1$.  
$\mc{V}(B)$ can be $p$-refined into $p$ copies of itself for any $p$ with 
$W =\sum_{i=0}^{m-1} |i,i,\ldots \ra \la i |$, 
and this is also true after the blocking of an arbitrary number of spins. 
Equivalently (see \cref{thm:CL}), the transfer matrix  
\be
E_{\mathrm{f}}= \sum_{i=0}^{m-1} |i,i\ra\la i,i|  
\label{eq:Ef}
\ee
 is a projector, thus it is infinitely divisible, and thus the state has a CL.
Recall that the transfer matrix (cf.\ \eqref{eq:tm}) acts on the auxiliary space, whereas $|V_N(B)\ra $ is a state living in the physical space. 
\end{example}

\begin{example}[The antiferromagnet]
\label{ex:aferro}
Consider an equal superposition of $m$ antiferromagnetic states,
\be
|V_m(B)\ra = \sum_{i=0}^{m-1} |i, i+1, \ldots, i+m-1\ra, 
\ee
where the sum is modulo $m$,
(and similarly for $N$ multiple of $m$, and $|V_N(B)\ra=0$ otherwise),
which is given by 
$B^i = |i\ra\la i+1|$ for $i=0,1,\ldots, m-1$.
$\mc{V}(B)$ can be $p-$refined into $p$ copies of itself, with $p=m+1$, with the isometry
\be
W = \sum_{i=0}^{m-1} |i,i+1, \ldots, i+m-1, i\ra\la i|.
\label{eq:Waf}
\ee
However, as we have discussed, this state does not have a CL, since the limit of this refinement is not stable under blocking. 
Equivalently (see \cref{thm:CL}), the transfer matrix $E_{\textrm{af}}$ is $p$-infinitely divisible with $p=m+1$, since 
\be
E_{\textrm{af}} = \sum_{i=0}^{m-1} |i,i\ra\la i+1,i+1| =E_{\textrm{af}}^{m+1}, 
\label{eq:Eaf}
\ee
but it is not infinitely divisible, since it does not have, e.g., an $m$th
root which is a quantum channel. To see the latter, note that the non-zero
part of the spectrum of $E_{\textrm{af}}$ is $\{e^{2\pi i
r/m}\}_{r=0}^{m-1}$, and thus for its $m$th root $\{e^{2\pi i
\ell_r/m^2}\}_{r=0}^{m-1}$ (with e.g.\ $\ell_1$ coprime to $m^2$),
whereas the set of eigenvalues of modulus 1 of a quantum channel needs to
be of the form $\{e^{2\pi i r/n}\}_{r=0}^{n-1}$ for some $n$ \cite{Wo11}.
On the other hand, $\mc{V}(B)$ has a coarse CL,
since after blocking $m$ sites we obtain the ferromagnet of \cref{ex:ferro}.
\end{example}

\begin{example}[A deformed antiferromagnet]
We consider the tensor $B(\alpha)$ (with $0<\alpha<1$) 
\begin{eqnarray}
&&B^0(\alpha) =  \sqrt{\alpha} \: |0\ra\la 1| + \sqrt{1-\alpha}\: |1\ra\la 0| ,\\
&&B^1(\alpha) =B^0(\alpha)^t,  \quad
\end{eqnarray}
where $t$ denotes transpose. The corresponding state has periodicity 2, as for even $N$ we have that 
\be
|V_N(B(\alpha))\ra = |\mu_0,\mu_1,\mu_0,\mu_1\ldots\ra +
|\mu_1,\mu_0,\mu_1,\mu_0\ldots\ra , 
\ee
where $|\mu_i\ra$ is  shorthand for $|\mu_i(\alpha)\ra$, and 
\be 
|\mu_i(\alpha)\ra=\sqrt{\alpha} |i\ra + \sqrt{1-\alpha} |i+1\ra
\label{eq:mu}
\ee
for $i=0,1$, where the sum on $i$ is mod 2.
Now, let 
\be
g_{\pm}(\alpha) = \frac{1}{2}\left(1\pm\sqrt{1-(4\alpha(1-\alpha))^{1/3}}\right) .
\ee
Then $\mc{V}(B(\alpha))$ can be $3$-refined into $\mc{V}(B(g_+(\alpha)))$ or $\mc{V}(B(g_-(\alpha)))$. 
The corresponding isometries are given by 
\begin{eqnarray}
W_{\pm}= \frac{1}{1-\lambda(\alpha)^2} (&&
|\nu_0^{\pm}  \ra \la \mu_0| + 
|\nu_1^{\pm}  \ra  \la \mu_1| \nonumber\\
&&-\lambda(\alpha) |\nu_0^{\pm} \ra \la \mu_1| 
-\lambda(\alpha) |\nu_1^{\pm} \ra \la \mu_0| 
), 
\end{eqnarray}
where 
$|\nu_i^{\pm}  \ra = |\mu_i(g_\pm(\alpha)),\mu_{i+1}(g_\pm(\alpha)),\mu_i(g_\pm(\alpha))\ra$
for $i=0,1$ 
where the sum on $i$ is modulo 2, 
and 
\be
\lambda(\alpha)=2\sqrt{\alpha(1-\alpha)} .
\label{eq:lambda}
\ee
However, this refinement is not stable under the blocking of two spins, since that would give rise to a state without periodicity. 
Equivalently (see \cref{thm:CL}), the transfer matrix $E_{B(\alpha)}$ is 3-infinitely divisible but not infinitely divisible. 
To see this, note that  in the Pauli basis (which is defined as usual, namely 
$\openone = |0\ra\la 0| +|1\ra\la 1| $, 
$X = |0\ra\la 1| +|1\ra\la 0|$,
$Y= -i|0\ra\la 1| +i|1\ra\la 0|$,
$Z = |0\ra\la 0| -|1\ra\la1|$) we have that 
\be
E_{B(\alpha)} = \diag(1,\lambda(\alpha), -\lambda(\alpha),-1). 
\label{eq:Espectrum}
\ee
Therefore, $E_{B(\alpha)} = E_{B(g_{\pm}^{\ell}(\alpha))}^{{3^{\ell}}}$ for all natural $\ell$, where 
$ E_{B(g_{\pm}(\alpha))} =\diag(1,\lambda(g_{\pm}(\alpha)), -\lambda(g_{\pm}(\alpha)),-1)$
where we choose either $g_+$ or $g_-$ for both eigenvalues, 
and $g_{\pm}^{\ell}$ denotes the $\ell$-fold application of the map $g_{\pm}$.
Yet, $E_{B(\alpha)}$ does not have, e.g., a square root which is a quantum channel, since the spectrum of a channel needs to be closed under complex conjugation, which is impossible given \eqref{eq:Espectrum}.
Thus, this state does not have a CL. 
However, after blocking two sites we obtain a Markovian transfer matrix, 
namely $E_{B(\alpha)}^2 = e^{L}$ with
$\mc{L}(\rho) = -\ln(\lambda(\alpha)) (Z\rho Z-\rho)$.
Thus, this state has a coarse CL.
\end{example}

 \begin{example}[The cluster state]
Consider the one-dimensional (1D) cluster state $\mathcal{V}(A)$ \cite{Br01}, which is obtained with the tensor
\be
A^1 = |1\ra \la +|, \quad
A^2 =|0\ra\la -|,
\ee
where
$|\pm\ra = (|0\ra \pm |1\ra)/\sqrt{2}$ \cite{Pe07}.
The transfer matrix 
\be
E_A = |0,0\ra \la -,-| +|1,1\ra \la +,+|
\ee
has eigenvalues $(1,0,0,0)$, but the eigenvalue 0 is associated to a non-trivial Jordan block. This block does not have a $p$th root for any $p$ (see Definition 1.2.\ of Ref.\ \cite{Hi08}), and thus $\mc{V}(A)$ cannot be $p$-refined  for any $p$.
However, $ E^{2} =(1/2) (|0,0\ra + |1,1\ra)(\la 0,0|  + \la 1,1|) $ is a projector, and hence has a trivial CL. 
Thus, the 1D cluster state has a coarse CL. 
\end{example}
 
 \begin{example}[The Holevo--Werner channel]
 \label{ex:holevo}
Consider the Holevo--Werner channel for qubits,
$\E(\rho)=\frac{1}{3} \left(\rho^t+\tr(\rho) \openone \right)$,
where   $\rho^t$ denotes its transpose.
The corresponding state is given by the tensor
\be
A^1 = \sqrt{\frac{2}{3}} |0\ra\la 0|,
\quad
A^2 = \sqrt{\frac{2}{3}} |1\ra \la 1|,
\quad
A^3 = \frac{1}{\sqrt{3}} X.
\ee
In the Pauli basis, $E = \diag(1,1/3,-1/3,1/3)$.
This channel cannot be expressed as a non-trivial composition of two quantum channels (even if these two are different) \cite{Wo08}, and thus $\mc{V}(A)$ cannot be $p$-refined for any $p$.
However, 
$E^2$ is Markovian, namely $\E^2 = e^{\mc{L}_\gamma}$, with
 \be
\quad \mc{L}_\gamma (\rho)= \gamma \left(
 X\rho X +  Y\rho Y + Z\rho Z-3\rho\right),  \:\: \gamma=\ln (9)/4.  \label{eq:Mark}
 \ee
Thus this state has a coarse CL.
More generally, note that every odd power of $E$ is not infinitely divisible,
$\det(E^n)<0$ for  odd $n$ (see Proposition 15 of \cite{Wo08}), 
and every even power of $E$ is Markovian.
\end{example}

\begin{example}[AKLT state]
\label{ex:aklt}
Consider the AKLT state \cite{Af88}, which is described in terms of the tensor
\be
A^1 = \frac{1}{\sqrt{3}} Z, \quad
A^2 = \sqrt{\frac{2}{3}} |1\ra \la 0|, \quad
A^3 = - \sqrt{\frac{2}{3}} |0\ra\la 1|.
\ee
In the Pauli basis, $E = \diag\left(1,-1/3,-1/3,-1/3\right).$
We thus have that $\det (E) =-1/27$, and the channel cannot be expressed as a non-trivial composition of two quantum channels  \cite{Wo08}. Thus the AKLT state cannot be $p$-refined for any $p$.
However, $\E^2 = e^{\mc{L}_\gamma}$, with $\mc{L}_\gamma$ given by \eqref{eq:Mark}. 
More specifically, 
$E^2 = \sum_{i=1}^4 B^i\otimes \bar B^i$, 
with 
\begin{align}
&B^1 = \sqrt{q} I, \quad
B^2 = \sqrt{\frac{1-q}{3}} X, \nonumber\\
&B^3 = \sqrt{\frac{1-q}{3}}Y, \quad
B^4 = \sqrt{\frac{1-q}{3}}Z, 
\end{align}
with $q=1/3$. This state can be $p$-refined for any $p$ into a state with the same matrices, but with $q$ replaced  by $q_p = (1+3^{(p-2)/p})/4$.
Thus the AKLT state has a coarse CL.
\end{example}

\begin{remark}[Multiple roots of the transfer matrix]
\label{rem:roots}
\cref{ex:ferro} and \cref{ex:aferro} illustrate that the transfer matrix of the ferromagnet with $m$ states (Eq.\ \eqref{eq:Ef}) has two $p=m+1$ roots which correspond to a transfer matrix, namely itself, and the transfer matrix of the antiferromagnet (Eq.\ \eqref{eq:Eaf}). These correspond to the two inequivalent ways of $p$-refining the state.

Similarly, \cref{ex:holevo} and \cref{ex:aklt}  illustrate that the depolarizing channel
$E= e^{L_\gamma}$ with $L_\gamma$ given in \eqref{eq:Mark}, 
has three square roots which are valid quantum channels:
the Markovian one ($e^{L_{\gamma}/2}$),
the Holevo--Werner channel,
and the transfer matrix corresponding to the AKLT state.
Only the Markovian root can be further refined, and 
thus the state corresponding to $E= e^{L_\gamma}$  has a CL. 
\end{remark}

Finally, we give an example of a state without a coarse CL. 

\begin{example}[A state without a coarse CL]
\label{ex:pancake}
Consider the family of qubit channels of the form $E =1\oplus \Delta$ in the Pauli basis,
with $\Delta$ positive definite and with eigenvalues $\lambda_1\geq \lambda_2\geq \lambda_3$.
We claim that if $0<\lambda_3<\lambda_1\lambda_2$, then $E^n$ is not infinitely divisible for any finite $n$.
To see this, note that by Theorem 24 in Ref.\ \cite{Wo08} $E$ is not infinitesimal divisible, and this is preserved under powers. Since infinitely divisible channels are a subset of infinitesimal divisible channels \cite{Wo08}, it follows that  the state corresponding to this transfer matrix does not have a coarse CL.

Take for example $\Delta$ diagonal and $\lambda_1=\lambda_2= a$, $\lambda_3 = a^2/2$ (with $0<a\leq 2-\sqrt{2}$, see the proof of \cref{prop:finite}). 
The corresponding tensor is given by
\bea
&&A^1= \sqrt{\tfrac{2+4a+a^2}{8}} \openone \qquad
A^2= -\sqrt{\tfrac{2-4a+a^2}{8}} Z\qquad
\nn\\
&&
A^3= \sqrt{\tfrac{2-a^2}{8}}|1\ra\la0| \qquad
A^4= \sqrt{\tfrac{2-a^2}{8}}|0\ra\la1| . \label{eq:pancake}
\eea
Note that $\lim_{n\to \infty} E^n = C$, where $C$ is the completely depolarizing channel, $\mathcal{C}(\rho)=\tr(\rho)\openone/2$. The latter is in the closure of the set of Markovian channels (e.g.\ $ \mathcal{C} = \lim_{\gamma\to \infty} e^{\mc{L}_\gamma}$, with $\mc{L}_\gamma$ given in \eqref{eq:Mark};
see Fig.~\ref{fig:geometry-markovian}).
\end{example}

\begin{figure}[t]
\begin{center}
\includegraphics[width=0.3\textwidth]{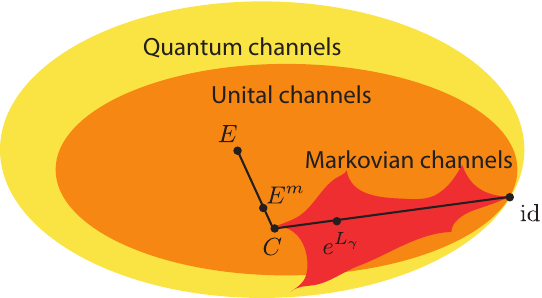}
\end{center}
\caption{Sketch of part of the geometry of qubit channels.
The volume of the sets is drawn arbitrarily.}
\label{fig:geometry-markovian}
\end{figure}

\subsection{Comparison between the two continuum limits}
\label{ssec:comparison}

The previous examples allow us to compare the two CLs.
Let $\mc{C}_D$ and $\mc{C}_D^{\textrm{coarse}}$ denote the set of families of states  $\mc{V}(A)$ of bond dimension $D$ with a CL and a coarse CL, respectively. 

\begin{proposition}
\label{prop:sets}
For every bond dimension $D$,
\begin{enumerate}[label=\arabic{enumi}.,ref=\arabic{enumi}.]
\item \label{im:1sets}
$\mc{C}_D$  is strictly included in $\mc{C}_D^{\textrm{coarse}}$, and
\item \label{im:2sets}
There are states  not in  $\mc{C}_D^{\textrm{coarse}}$.
\end{enumerate}
\end{proposition}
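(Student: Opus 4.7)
The plan is to establish both items by combining the examples already constructed in the preceding pages with a lifting argument to reach arbitrary bond dimension. Since most of the work has been done in the discussion of the five examples, the proof reduces to collating them and verifying bond-dimension flexibility.

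First, I would dispose of the inclusion $\mc{C}_D \subseteq \mc{C}_D^{\textrm{coarse}}$ in one line: by \cref{def:dcl}, taking $n=1$ and $B=A$ shows that every CL is trivially a coarse CL. For strictness in item~\ref{im:1sets}, I would invoke the $D$-state antiferromagnet, whose irreducible tensor $B^i=|i\ra\la i+1|$ has bond dimension exactly $D$ (the matrices generate the cyclic shift and hence have no nontrivial invariant subspace). The spectral argument given after \eqref{eq:Waf} shows that $E_B$ is $(D+1)$-infinitely divisible but not infinitely divisible, since taking a $D$th root sends the modulus-one eigenvalues $\{e^{2\pi i r/D}\}$ into $D^2$th roots of unity that cannot all appear in the peripheral spectrum of a quantum channel. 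By \cref{thm:CL} and \cref{thm:dcl}, $\mc{V}(B)\in \mc{C}_D^{\textrm{coarse}}\setminus \mc{C}_D$.

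For item~\ref{im:2sets}, I would start from the pancake channel \eqref{eq:pancake}, which gives a bond-dimension-$2$ example by the argument in the text: Theorem 24 of \cite{Wo08} guarantees that $E$ is not infinitesimal divisible when $0<\lambda_3<\lambda_1\lambda_2$, and since non-infinitesimal-divisibility is inherited by powers and implied by infinite divisibility, no $E^n$ can be infinitely divisible; \cref{thm:dcl} then rules out a coarse CL. To reach general $D\ge 2$, I would exhibit a $D$-dimensional channel of the form $1\oplus \Delta$ on the traceless sector whose restriction to a three-dimensional subspace satisfies the same strict inequality on singular values, and pick an irreducible Kraus representation of it (e.g.~by adjoining a unitary block that does not commute with $\Delta$ to ensure irreducibility). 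The non-infinitesimal-divisibility obstruction then survives, since it only uses the singular value inequality on the relevant block.

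The main obstacle is precisely this last step: arranging a bond-dimension-$D$ irreducible tensor whose transfer matrix still fails to be infinitesimal divisible. A naive direct sum with a ferromagnet is reducible and collapses to the $D=2$ case under reduction, so one must either give an irreducible Kraus representation of a higher-dimensional pancake-type channel and verify the singular-value criterion directly, or else replace the cited qubit criterion (Theorem~24 of \cite{Wo08}) by the more general determinantal obstruction $\det(E^n)<|\det(E^n)|_{\text{Markovian bound}}$ available in higher dimensions. Either route is routine but requires enough explicit spectral data to check the hypothesis; I would favour the explicit $D$-dimensional block-diagonal construction since it keeps the verification within the spectral framework already used throughout the examples.
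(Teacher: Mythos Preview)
Your argument for item~\ref{im:1sets} via the $D$-state antiferromagnet is correct and in fact slightly cleaner than the paper's: the paper instead fixes the Holevo--Werner example at $D=2$ and then lifts both items at once by the trivial embedding $\mathcal{M}_D=\mathcal{M}_2\oplus 0_{D-2}$. Your route has the advantage of producing an \emph{irreducible} witness at every $D$ without any lifting, at the cost of invoking the spectral obstruction for the $m$-state antiferromagnet rather than the single qubit example.

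For item~\ref{im:2sets} your base case coincides with the paper's, but you are making the lift to higher $D$ much harder than it needs to be. The sets $\mc{C}_D$ and $\mc{C}_D^{\textrm{coarse}}$ are sets of \emph{state families} $\mc{V}(A)$; whether a given family has a (coarse) CL is decided by the transfer matrix of its irreducible representation, independently of which bond dimension you use to write it down. Padding the pancake tensor with zeros, $A^i\mapsto A^i\oplus 0_{D-2}$, yields a bond-dimension-$D$ tensor generating the very same family $\mc{V}(A)$, whose irreducible form is the original $D=2$ one and hence still fails infinite divisibility at every power. There is no need to manufacture an irreducible higher-dimensional pancake or to invoke a determinantal bound beyond qubits; the ``collapses to the $D=2$ case under reduction'' that you flag as an obstacle is precisely what makes the trivial embedding work.
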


\begin{proof}
\ref{im:1sets} That $\mc{C}_D$ is included in $\mc{C}_D^{\textrm{coarse}}$
is trivial from the definition, and
 for $D=2$, that the inclusion is strict  follows e.g.\ from \cref{ex:holevo}. 
For $D=2$, the second claim is proven by \cref{ex:pancake}.
In both cases, the extension to higher $D$ follows trivially by embedding $\mathcal{M}_2$ into  $\mathcal{M}_D$, for example, as  $\mathcal{M}_D = \mathcal{M}_2 \oplus 0_{D-2}$, where $0_{D-2}$ is the 0 matrix.
\end{proof}

We also gain the following insight from \cref{ex:pancake}.

\begin{proposition}\label{prop:finite}
There are states that can be $p$-refined only a finite number of times.
\end{proposition}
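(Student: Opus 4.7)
The plan is to build the required state directly from the pancake channel. For $E$ the pancake channel of \eqref{eq:pancake} with $0<a\leq 2-\sqrt{2}$ chosen small, I would consider $\mc{V}(A)$ with $E_A=E$. First, I verify that $E$ is $p$-divisible for some integer $p\geq 2$: the principal $p$-th root $F=\diag(1,a^{1/p},a^{1/p},(a^2/2)^{1/p})$ in the Pauli basis satisfies the Fujiwara--Algoet CP conditions for $a$ sufficiently small, so it is a valid CPTP map. Hence $\mc{V}(A)$ admits at least one $p$-refinement.

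Next I would show that $\mc{V}(A)$ cannot be $p$-refined indefinitely. The principal $p^\ell$-th root of $E$ has eigenvalues $(1,a^{1/p^\ell},a^{1/p^\ell},(a^2/2)^{1/p^\ell})$, and expanding the first Fujiwara--Algoet condition $(1+(a^2/2)^{1/p^\ell})^2 \geq (2a^{1/p^\ell})^2$ to leading order in $t=1/p^\ell$ yields a violation of size $-4(\ln 2)\,t + O(t^2)<0$, so the principal root ceases to be CP for $\ell$ large. More robustly, suppose $E$ were $p^\ell$-divisible for every $\ell$ via CPTP roots $F_\ell$. Then the eigenvalue moduli $|\lambda_i(F_\ell)|=|\lambda_i(E)|^{1/p^\ell}$ tend to $1$, and a compactness/accumulation argument---analogous to the proof in \cite{Wo08} that infinitely divisible channels lie in the closure of infinitesimal divisible ones---forces $F_\ell\to\openone$ along a subsequence. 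This would express $E$ as a limit of compositions of arbitrarily many near-identity CPTP maps, placing it in the closure of Markovian channels (i.e., among the infinitesimal divisible channels). But the pancake inequality $\lambda_3<\lambda_1\lambda_2$ and Theorem~24 of \cite{Wo08} imply $E$ is not infinitesimal divisible, a contradiction. Hence $\mc{V}(A)$ can be $p$-refined only a finite number of times.

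The main obstacle is ruling out non-principal $p^\ell$-th roots as alternative CPTP candidates. Real sign-flipped roots lead to the same leading-order Fujiwara--Algoet violation by the $\pm$-symmetry of the CP conditions. Complex-eigenvalue roots (with eigenvalues in complex-conjugate pairs) are the delicate case: here I would rely on the accumulation-point argument above, noting that any unitary limit $U\neq\openone$ would force $F_\ell=U+o(1)$ so that $F_\ell^{p^\ell}$ would retain a unitary leading-order action incompatible with the strictly contracting singular values of $E$ on the traceless subspace. This pins the accumulation point at $\openone$ and closes the gap to infinitesimal divisibility, completing the contradiction with the pancake property.
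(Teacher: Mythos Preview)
Your example is the same one the paper uses---the pancake family $E(a,\eta)=\diag(1,a,a,\eta a^{2})$ in the Pauli basis, with $\eta=1/2$ in your case---and your first check (the principal $p^{\ell}$-th root eventually fails the Fujiwara--Algoet/Ruskai--Szarek--Werner CP inequality) is exactly the paper's argument. The paper simply phrases this as an explicit threshold: $E(a,\eta)$ is CP iff $a\le g(\eta):=\tfrac{1}{\eta}(1-\sqrt{1-\eta})$, and since the principal $\ell$-th root is $E(a^{1/\ell},\eta^{1/\ell})$, one reads off the largest $\ell$ (called $\ell_{\mathrm{sol}}$) for which the root is still CP. So your direct route and the paper's coincide; the paper just packages the asymptotic violation you expand to first order into a single equation.

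Where you diverge is the ``robust'' detour through infinitesimal divisibility. This is unnecessary and, as stated, has a genuine gap: from $|\lambda_i(F_\ell)|\to 1$ you cannot conclude $F_\ell\to\openone$ along a subsequence, and the argument that a non-identity unitary accumulation point $U$ would leave $F_\ell^{p^\ell}$ ``unitary to leading order'' does not survive taking the $p^\ell$-th power of an $o(1)$ perturbation. (Indeed, the antiferromagnet is $p$-infinitely divisible with roots that never approach $\openone$.) The non-principal-root worry you flag is legitimate---the paper glosses over it---but it is dispatched much more cheaply than via accumulation: any TPCP root $F$ of $E$ must commute with $E$, hence $F=1\oplus M\oplus\mu$ in the Pauli basis with $M^{\ell}=aI_2$ and $\mu^{\ell}=\eta a^{2}$; the singular values of $M$ satisfy $\sigma_1\sigma_2=|\det M|=a^{2/\ell}$, so $\sigma_1+\sigma_2\ge 2a^{1/\ell}$ by AM--GM, and the tetrahedron CP condition $1-\sigma_1-\sigma_2+|\mu|\ge 0$ (or its sign-flipped variants, which are only harder) is therefore hardest to satisfy precisely for the principal root. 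Once the principal root fails, all roots fail. With that one-line observation your direct approach is complete, and the infinitesimal-divisibility machinery can be dropped.
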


\begin{proof}
Consider the family of channels whose Lorentz normal form \cite{Wo08} is given by $E(a,\eta):=\diag(1,a,a,\eta a^{2})$, with $a\in (0,1]$ and $\eta\in (0,1)$.
It is easy to see that $E(a,\eta)$ is completely positive if and only if $a\leq \frac{1}{\eta}(1-\sqrt{1-\eta})=:g(\eta)$
(This can be seen by applying Eq.\ (9) of Ref.\ \cite{Ru02} to our case).
Denoting by $\ell_{\mathrm{sol}}$ the solution to the equation
$a^{-\ell} = g(\eta^{-\ell})$, we see that  $E(a,\eta)$ is $\lfloor \ell_{\mathrm{sol}} \rfloor$-divisible, but   not $(\lfloor \ell_{\mathrm{sol}} \rfloor+1)$-divisible.
Correspondingly, the state can only be $n$-refined $ \lfloor\log_{p}\lfloor \ell_{\mathrm{sol}} \rfloor \rfloor$ times.
For example, for $a=0.1$ and $\eta=0.9$,  we have that the state can be $2$-refined only  $5$ times.
\end{proof}

\section{Conclusions and Outlook}
\label{sec:concl}

In summary, we have investigated which TI MPS have a CL, which is defined as the infinite iteration of the inverse of a renormalization procedure, 
together with a regularity condition in the limit.
We have found that a TI MPS has a CL if and only if its transfer matrix is infinitely divisible.
We have then defined the coarse CL 
as the CL of some of the coarser descriptions of the state,
and have characterised the states with a coarse CL using the divisibility properties of their transfer matrices.
We have shown that various well-studied states
(such as the AKLT state,
 the 1D cluster state
 or the antiferromagnet)
 have a coarse CL, but that not all states have one.

This work raises several questions.
One concerns the representation of the states obtained in the limit as matrix products, which would require a generalization of the class of cMPS.
This would also allow to study the uniqueness of the CL. 
It also remains to be seen whether there is a meaningful definition of CL such that all TI MPS have a limit of this sort.
A further  possibility is to consider the renormalization procedure  determined by the Multiscale Entanglement Renormalization Ansatz (MERA) \cite{Vi07},  for which the class of continuous MERA was defined in \cite{Ha13b}, and study continuum limits in that setting.

\section*{Acknowledgements}

GDLC thanks T. J. Osborne for discussions.
GDLC acknowledges support from the Elise Richter Fellowship of the FWF.
This work was supported in part by the Perimeter Institute of Theoretical Physics. Research at Perimeter Institute is supported by the Government of Canada through Industry Canada and by the Province of Ontario through the Ministry of Economic Development and Innovation.
N.S. acknowledges support by the European Union through the ERC-StG
WASCOSYS (Grant No. 636201).
DPG acknowledges support from  MINECO (grant MTM2014-54240-P), Comunidad de Madrid (grant QUITEMAD+-CM, ref. S2013/ICE-2801), and
Severo Ochoa project SEV-2015-556.
This work was made possible through the support of grant $\#$48322 from the John Templeton Foundation.
This project has received funding from the European Research Council (ERC) under the European Union's Horizon 2020 research and innovation programme (grant agreement No 648913).
JIC acknowledges support from the DFG through the NIM (Nano Initiative Munich).

%%%========================
%%%========================
%%%========================

\appendix

\section{Proof of \cref{thm:CL}}
\label{app:proof-dyadic}

Here we prove \cref{thm:CL}, which we state again.

\begin{theorem:CL}
Given $\mathcal{V}(B)$ with $B$ in irreducible form, the following statements are equivalent:
\begin{enumerate}
\item $\mathcal{V}(B)$ has a CL.
\label{im:1}
\item $E_B$ is infinitely divisible.
\label{im:2}
\item There is a TPCPM $P$ and a Liouvillian of Lindblad form $L$ such that $E_B=Pe^L$,  $P^2=P$ and $PLP=PL$. \label{im:3}
\end{enumerate}
\end{theorem:CL}

\begin{proof}
That  \cref{im:2} and  \cref{im:3} are equivalent was proven by Holevo \cite{Ho87} and Denisov \cite{De89b}.

By \cref{def:CL} and the subsequent discussion, $\mathcal{V}(B)$ has a CL if there is $p>1$ such that $E_B$ is $p$-infinitely divisible and $f_{p,E_B}$ is continuous at zero.
It is thus immediate to see that \cref{im:2} implies  \cref{im:1},
since being $p$-infinitely divisible is a particular case of being  infinitely divisible, and using \cref{im:3} we have that
$f_{p,E_B}(n/p^\ell) = Pe^{L n/p^\ell}$  is continuous at 0.

Finally, to see that \cref{im:1} implies \cref{im:2}, assume that $E_B$ is
$p$-infinitely divisible and that $f_{p,E_B}$ is continuous at 0.  We will
construct the $n$th root of $E\equiv E_B$ by using the expansion of  $1/n$ in terms
of $1/p^\ell$. So for an arbitrary $n\in \mathbb{N}$,
we have that
\be
\frac{1}{n} = \frac{1}{p^{\ell}} \left(\left\lfloor  \frac{p^{\ell}}{n}
\right\rfloor + \frac{r_\ell}{n}\right) ,\label{eq:n}
\ee
where $\lfloor  \frac{p^{\ell}}{n} \rfloor$ is the largest integer which is at most that number (floor), and $0\leq r_k<n$ is the residue of the division.

Let us  consider
\be
\left(E_{p^{\ell}}^{\lfloor p^{\ell}/n\rfloor} \right)_\ell .
\ee
Since this is a sequence in a compact space, there must exist a
subsequence that converges to a limit which we call $E_n$,
\be
\left(E_{p^{l_{k}}}^{\lfloor p^{l_{k}}/n\rfloor}=:T_k \right)_k \to E_n .
\label{eq:conv}
\ee
By completeness, $E_n$ is a quantum channel. In the rest of the proof we will show that $E_n$ is an $n$th root of $E$, i.e.\ $E_n^n = E$.

To see this, observe that
 \be
 \label{for5}
 \left| \left|E_n^n - E\right|\right| \le  \left| \left| E_n^n-T_{k}^n\right|\right| + \left|\left|T_{k}^n-E\right|\right| ,
 \ee
where for a superoperator $L$ we use the norm $||L|| = \sup_X
||L(X)||_1/||X||_1$, where $||X||_1$ denotes the Schatten 1-norm.  The
first term of \eqref{for5} vanishes as $k\to \infty$, since
 \be
 ||E_n^n - T_{k}^n || \le
n ||E_n-T_k|| \: \le
n \epsilon  ,
\label{eq:boundn}
 \ee
 where the first inequality follows from the identity $T_{k}^n - E_n^n =
(T_{k}^{n-1} + T_{k}^{n-2}E_{n} +\ldots+ E_{n}^{n-1})(T_{k}-E_{n})$ and
the fact that $||T_{k}^{n-j} E_{n}^{j-1}||=1$ for all $j=1,\ldots, n$,
 and the second from \eqref{eq:conv}.

To show that the second term of \eqref{for5} vanishes, 
we use that
\begin{align*}
\left\|T_{k}^n-E\right\|
&\stackrel{\eqref{eq:n}}{\le}
\left\|E_{p^{\ell_k}}^{\lfloor p^{\ell_k}/n\rfloor n } -
E_{p^{\ell_k}}^{\lfloor p^{\ell_k}/n\rfloor n +r_k}\right\|
\\
&\le \left\| E_{p^{\ell_k}}^{\lfloor p^{\ell_k}/n \rfloor n-1} \right\|
 \,\left\| E_{p^{\ell_k}} - E_{p^{\ell_k}}^{r_k+1}\right\|
 \leq \left\| E_{p^{\ell_k}} - E_{p^{\ell_k}}^{r_k+1}\right\| \ , 
\end{align*}
where we have used that  $\left\| E_{p^{\ell_k}}\right\|=1$. Since
$r_k+1\leq n$, we have that both $1/p^{\ell_k}$ and 
$(r_k+1)/p^{\ell_k}\rightarrow 0$, and thus continuity of 
$f_{p,E_B}\left(n,p^{\ell_k}\right) = E^{n}_{p^{\ell_k}}$
at zero implies that
$\left\| E_{p^{\ell_k}} - E_{p^{\ell_k}}^{r_k+1}\right\|\rightarrow 0$.
\end{proof}

%\bibliographystyle{apsrev}
%\bibliography{/Users/gemmadelascuevas/Dropbox/Gemma/Special-files/all-my-bibliography.bib}

\end{document}